\documentclass[11pt,english]{article}

%
% Packages
%

% Standard packages
\usepackage[utf8]{inputenc}
\usepackage[T1]{fontenc}
\usepackage{lmodern}
\usepackage{microtype}

\usepackage{amssymb}
\usepackage{color}
\usepackage{amsmath}
\usepackage{amssymb}
\usepackage{mathtools}
\usepackage{mathrsfs}

\usepackage{booktabs}
\usepackage{array}
\newcolumntype{L}[1]{>{\raggedright\arraybackslash}m{#1}}
\newcolumntype{C}[1]{>{\centering\let\newline\\\arraybackslash\hspace{0pt}}m{#1}}
\newcolumntype{R}[1]{>{\raggedleft\arraybackslash}m{#1}}
\usepackage{threeparttable}

\usepackage{cite}
\usepackage{url}
\usepackage[svgnames]{xcolor}
\usepackage{xspace}
\usepackage{needspace}
\usepackage[inline]{enumitem}

\usepackage{pifont}

\usepackage{tikz}
\usetikzlibrary{calc,fadings,shapes.arrows,shadows,backgrounds,positioning}

  % float has to be loaded before hyperref and algorithm after hyperref
\usepackage{float}
\usepackage{graphicx}
\usepackage{caption,subcaption}

\usepackage{amsthm}

\usepackage{algorithm}
\usepackage[noend]{algpseudocode}

\usepackage[capitalize]{cleveref}

\theoremstyle{plain}
\newtheorem{theorem}{Theorem}
\newtheorem{lemma}[theorem]{Lemma}
\newtheorem{proposition}[theorem]{Proposition}

\newtheorem{fact}[theorem]{Fact}
\newtheorem*{remark}{Remark}
\theoremstyle{definition}
\newtheorem{definition}[theorem]{Definition}

% \iftoggle{full}{
%   \renewcommand*{\backref}[1]{}
%   \renewcommand*{\backrefalt}[4]{
%     \ifcase #1
%     (Not cited.)
%     \or
%     (Page~#2.)
%     \else
%     (Pages~#2.)
%     \fi
%   }
%   \renewcommand*{\backrefsep}{, }
%   \renewcommand*{\backreftwosep}{ and~}
%   \renewcommand*{\backreflastsep}{, and~}
% }{}

% \makeatletter
% \AtBeginDocument{
%   \hypersetup{
%     pdftitle = {\@title},
%     pdfauthor = {\@author}
%   }
% }
% \makeatother

% Packages to be loaded AFTER hyperref

\setcounter{tocdepth}{2}

\newsavebox{\fboxenvbox}

 % text width inside a fbox/fboxenv

%
% General and Math commands
%

% Abbreviations

% Math sets and groups
\newcommand{\Z}{\mathbb{Z}}
\newcommand{\N}{\mathbb{N}}

% Probability

% Math misc
\makeatletter
\newcommand\suchthat{%
 \@ifstar
  {\mathrel{}\middle|\mathrel{}}
  {\mid}%
}
\makeatother

%
% General Crypto commands
%

% Algorithms

% Parameters

% Crypto misc
\newcommand{\ve}{\varepsilon}
\newcommand{\eps}{\ve}

% Commands specific to the paper 
%

%\spnewtheorem*{theorem*}{Theorem}{\bfseries\upshape}{\itshape}
\let\leq\leqslant % Vous pouvez changer si vous n'aimez pas
\let\geq\geqslant % Vous pouvez changer si vous n'aimez pas

\newcommand{\level}{{\ell}}

\newcommand{\levelmax}{L}
\newcommand*{\true}{\textsf{true}\xspace}

\newcommand{\OPT}{\textsc{OPT}}

% Complexity class

\newcommand{\csP}{\textsf{\#P}\xspace}

% Other problems

% Gates

\newcommand{\AND}{\ensuremath{\mathsf{and}}\xspace}
\newcommand{\XOR}{\ensuremath{\mathsf{xor}}\xspace}
\newcommand{\NOT}{\ensuremath{\mathsf{not}}\xspace}

\newcommand{\len}{\mathsf{len}}

%%% Local Variables: 
%%% mode: latex
%%% TeX-master: "../main"
%%% End: 

%{\bf Beginning of Claire's def'n}
\usepackage{bbm}
%{\bf end of Claire's def'n}

\usepackage{fullpage}
\usepackage[textwidth=20mm]{todonotes}
\usepackage{xspace}

\ifdefined\DEBUG

 \newcommand{\fab}[1]{\textcolor{blue}{#1}}
 \newcommand{\ta}[1]{\textcolor{cyan}{#1}}
 \newcommand{\cla}[1]{\textcolor{green}{#1}}
 \newcommand{\han}[1]{\textcolor{red}{#1}}
  
  \newcommand{\mytodo}[2]{\todo[size=\scriptsize, color=#1!50!white]{#2}}
  
  \newcommand{\fabr}[1]{\mytodo{blue}{#1}}
  \newcommand{\tar}[1]{\mytodo{cyan}{#1}}
  \newcommand{\clar}[1]{\mytodo{green}{#1}}
  \newcommand{\hanr}[1]{\mytodo{red}{#1}}
  
\else

  \newcommand{\fab}[1]{#1}
  \newcommand{\ta}[1]{#1}
  \newcommand{\cla}[1]{#1}
  \newcommand{\han}[1]{#1}

  \newcommand{\fabr}[1]{}
  \newcommand{\tar}[1]{}
  \newcommand{\clar}[1]{}
  \newcommand{\hanr}[1]{}

\fi

\title{Optimization of Bootstrapping in Circuits}

\author{
\fab{Fabrice Benhamouda}\thanks{ENS, CNRS, INRIA, and PSL Research University, Paris, France. \texttt{fabrice.benhamouda@ens.fr}. Research supported in part by the CFM Foundation and the French FUI Project FUI AAP 17 CRYPTOCOMP.}
\and
%\ta{Tancrède Lepoint}\thanks{SRI International, New York, USA. \texttt{tancrede.lepoint@sri.com}}
\ta{Tancrède Lepoint}\thanks{SRI International, USA. \texttt{tancrede.lepoint@sri.com}.}
\and
\cla{Claire Mathieu}\thanks{ENS, CNRS, and PSL Research University, Paris, France. \texttt{cmathieu@di.ens.fr}.}
\and
\han{Hang Zhou}\thanks{Max-Planck-Institut f\"{u}r Informatik, Saarbrücken, Germany. \texttt{hzhou@mpi-inf.mpg.de}. Research supported in part by the Lise Meitner Award Fellowship.}
}
\date{}

\begin{document}

\maketitle

\begin{abstract}
In 2009, Gentry proposed the first Fully Homomorphic Encryption (FHE) scheme, an extremely powerful cryptographic primitive that enables to perform computations, i.e., to evaluate circuits, on encrypted data without decrypting them first. This has many applications, in particular in cloud computing.

In all currently known FHE schemes, encryptions are associated to some (non-negative integer) noise level, and at each evaluation of an AND gate, the noise level increases.
This is problematic because decryption can only work if the noise level stays below some maximum level~$L$ at every gate of the circuit.
To ensure that property, it is possible to perform an operation called \emph{bootstrapping} to reduce the noise level. However, bootstrapping is time-consuming and has been identified as a critical operation. This motivates a new problem in discrete optimization, that of choosing where in the circuit to perform bootstrapping operations so as to control the noise level; the goal is to minimize the number of bootstrappings in circuits.

In this paper, we formally define the \emph{bootstrap problem}, we design a polynomial-time \mbox{$L$-approximation} algorithm using a novel method of rounding of a linear program, and we show a matching hardness result: $(L-\eps)$-inapproximability for any $\eps>0$.

\addvspace{\baselineskip}
\noindent
\textbf{Keywords.} DAG, circuit, approximation algorithms, inapproximability, linear programming, rounding, fully homomorphic encryption (FHE)
\end{abstract}

\pagenumbering{roman}
\thispagestyle{empty}

\newpage{}
\setcounter{page}{1}
\pagenumbering{arabic}

\section{Introduction}
\label{sec:intro}
% !TEX root = ../main.tex

Imagine evaluating a circuit with noise: at each gate the noise level may increase due to the computation. Now, imagine that you can occasionally perform a computationally expensive operation on the output of a gate (called \emph{bootstrapping}) to reduce the noise level. Given a circuit, at which gates should you apply the bootstrapping operation to the output of the gate so that the maximum noise level remains within a certain tolerance level?  We want to minimize the number of bootstrappings. 

For example, if the noise level at an input gate equals 0 and the noise level at a gate with two direct predecessors $u$ and $v$ equals $\max(\text{noiselevel}(u),\text{noiselevel}(v))+1$,  then in the circuit in~\cref{fig:intro-example}, it is possible to maintain a maximum noise level of at most $L=3$ by doing 2 bootstrappings; that is optimal.

\begin{figure}[hb]
\centering
\begin{subfigure}[t]{0.325\textwidth}\centering
  \includegraphics[scale=1.1]{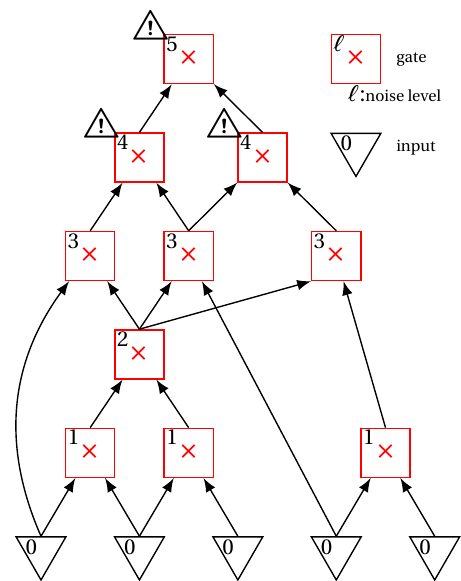}
  \caption{Original circuit:\\ \footnotesize without bootstrapping, the noise of the output of the three top gates is above $L=3$ (warning sign).}
\end{subfigure}~
\begin{subfigure}[t]{0.325\textwidth}\centering
  \includegraphics[scale=1.1]{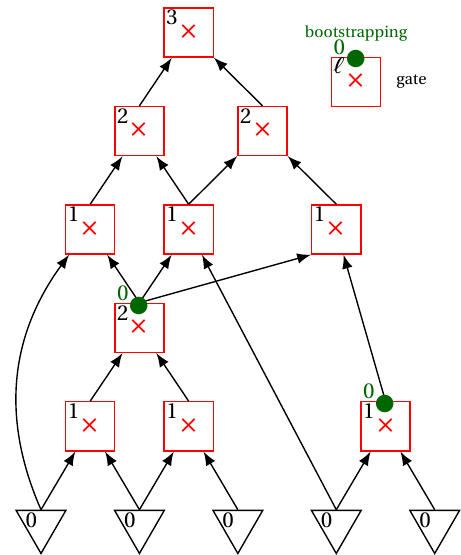}
  \caption{Optimal bootstrapping strategy:\\ \footnotesize the noise of the output of all gates is below $L=3$.}
\end{subfigure}~
\begin{subfigure}[t]{0.325\textwidth}\centering
  \includegraphics[scale=1.1]{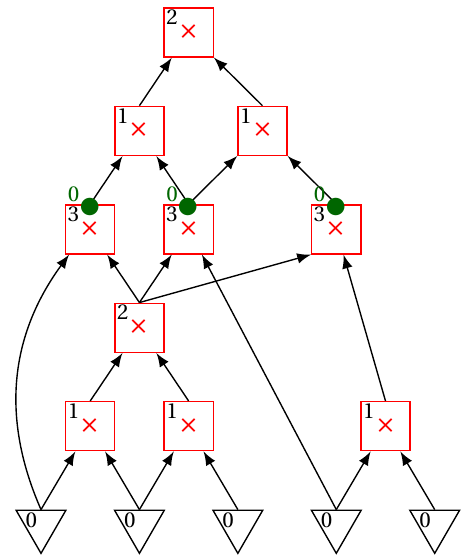}
  \caption{Naive boostrapping strategy:\\ \footnotesize the noise of the output of all gates is below $L=3$ but the strategy requires 3 bootstrappings (non-optimal).}
\end{subfigure}
{\footnotesize }
\caption{Example of circuit and of two bootstrapping strategies}
\label{fig:intro-example}
\end{figure}

\subsection{Motivation}
\label{sec:motivation}

This problem arises in cryptography in the context of \emph{fully homomorphic encryption (FHE)}~\cite{STOC:Gentry09,C:Gentry10,EC:DGHV10,C:CMNT11,EC:CorNacTib12,EC:CCKLLT13,FOCS:GenHal11,C:GenHalSma12,STOC:LopTroVai12,C:GenSahWat13,C:AlpPei13,TOCT:BraGenVai14,SIAM:BraVai14,ITCS:BraVai14,EC:DucMic15}. A fully homomorphic encryption scheme enables one to encrypt bits and keep them confidential, while allowing  {anyone} who is given an encryption $E(a)$ of a bit $a$ and an encryption $E(b)$ of a bit $b$ to publicly compute $E(\NOT\,a)$, $E(a\,\XOR\,b)$, and $E(a\,\AND\,b)$. Such a scheme makes it possible to securely compute {any} binary
circuit over encrypted bits. 
This primitive has tremendous potential for applications, the canonical one being to the problem of outsourcing
computation to a remote server without compromising one's privacy.
Concrete application examples include biometric identification, statistics over encrypted
data~\cite{DBLP:conf/ccs/NaehrigLV11,DBLP:journals/jbi/BosLN14}, machine
learning~\cite{ICISC:GraLauNae12}, and private genomic
analyses~\cite{LC:LauLopNae14}.

All existing instantiations of FHE follow the same blueprint~\cite{STOC:Gentry09}: ciphertexts (i.e., encryptions of bits) contain
some ``noise'' that grows during the circuit evaluation. To ensure correctness
at decryption time, one has to regularly perform \emph{bootstrapping}
operations on the ciphertexts whose aim is to lessen the noise.\footnote{An upper bound on the
admissible noise is given as part of the parameters of the FHE scheme.} Unfortunately, such operations are very expensive in practice (see, e.g.,~\cite{EC:GenHal11,EC:CCKLLT13,FCW:RohCou14,EC:HalSho15,EC:DucMic15,EC:NuiKur15}), hence the  question:
\begin{center}
  \emph{%Given a circuit and an FHE scheme, w
    What is a minimum set of ciphertexts to be bootstrapped in order to correctly evaluate the circuit?
    % Which ciphertexts should be bootstrapped to correctly evaluate the circuit, so that the number of bootstrappings is minimized?
  }
\end{center} % \fabr{Changed. Please check}
This is called the \emph{bootstrap problem}.

% \fabr{Reading this again, I'm not completely sure whether all reviewers will understand why we talk about ciphertexts and not gates/outputs here? Should we add some details?}

\bigskip
In all efficient implementations of FHE schemes~\cite{C:GenHalSma12,STOC:LopTroVai12,FCW:RohCou14,TOCT:BraGenVai14,SIAM:BraVai14,C:HalSho14,EC:HalSho15,EC:DucMic15}, non-linear gates
(\AND) introduce much more additional noise than linear gates  (\NOT and \XOR), hence a simplified model where evaluation of linear
gates do not increase the noise; see,
e.g.,~\cite{FCW:LepPai13,C:HalSho14,EC:ARSTZ15,SAC:PaiVia15,fse:CanteautCFLNPS16}.

Formally, each ciphertext has an associated ``noise level'' $\level\in 
\Z_{\ge 0}$; 
evaluating a \NOT gate over a ciphertext does not change its noise
level; evaluating an \XOR gate yields a ciphertext whose noise level $\max(\cdot, \cdot)$ is the
maximum noise level of its inputs;
%, i.e., \XOR gates behave like $\max(\cdot, \cdot)$ for the noise level. 
on the other hand, evaluating a non-linear gates (\AND) yields a ciphertext with increased noise level $\max(\cdot, \cdot)+1$
(noise behaviors of other FHE schemes are discussed later).

To ensure that the circuit evaluation is correct, the FHE
scheme has a parameter $\levelmax\geq 1$, which is independent of the circuit size, and requires that all ciphertexts must have their noise
levels less than or equal to~$L$ at all gates of the circuit being evaluated. 
This requires performing  a bootstrapping operation on the output of some gates of the circuit; a bootstrapping operation 
reduces the noise level of the ciphertext to~$0$.
The first instantiations of FHE were for $\levelmax=1$~\cite{EC:GenHal11,C:CMNT11,EC:CorNacTib12,EC:CCKLLT13}. Most of
them merely perform a bootstrapping operation right after (see, e.g.,~\cite{EC:GenHal11})
or right before (see, e.g.,~\cite{C:GenHalSma12}) each \AND gate evaluation. However, this can be computationally wasteful since fewer bootstrappings may be sufficient to
evaluate the whole circuit when positioned more carefully~\cite{FCW:LepPai13,SAC:PaiVia15}; see also~\cref{fig:intro-example} (in this figure, square gates correspond to \AND gates). 

Lepoint and Paillier~\cite{FCW:LepPai13} modeled the problem of constructively computing the exact minimum number of
bootstrappings for any $\levelmax\geq 1$ based on Boolean satisfiability. They associated a Boolean to each ciphertext during the circuit evaluation.
The Boolean is equal to \true when the ciphertext should be bootstrapped.
Using the logic circuit and the noise level constraints, the authors constructed a Boolean monotone predicate $\phi$ which captures the
correctness of the circuit evaluation.
They then described a heuristic method
to recover the smallest prime implicant of $\phi$, which directly yields the
minimum number of bootstrappings and the ciphertexts to be bootstrapped.
However no
complexity analysis nor hardness result were claimed in~\cite{FCW:LepPai13}.\footnote{Lepoint and Paillier only indicated that for Boolean monotone predicates, finding the
size of the smallest prime implicant is known to be \csP-complete~\cite[Section~6]{IANDC:GolHagMun08}.} 

Later, Paindavoine and Vialla~\cite{SAC:PaiVia15} showed that, for $L=1$, the bootstrap problem can be solved in polynomial time by a reduction to $(s,t)$-min-cut; and that, for $L\geq2$, the bootstrap problem is NP-hard by a reduction from the vertex cover problem.\footnote{We note that in the terminology of \cite{SAC:PaiVia15}, $l_{max}=L+1$ since the minimum noise level in their model is $1$.} 
They also provided experimental results on real-world circuits (namely integer addition, integer multiplication, and some cryptographic primitives), based on mixed integer linear programming.\footnote{Their linear programming relaxation is different from the one in this paper.}

\subsection{Problem Formulation}
\label{sec:formulation}
%{\bf Beginning of Claire's def'n}

%\usepackage{bbm}
In graph theory terms, the bootstrap problem can be formulated as follows: the input is a positive integer $L$ and a directed acyclic graph (DAG) $G=(V,E)$ whose vertices all have indegree~0 or 2, with colors on the vertices: vertices of indegree 0 are white and vertices of indegree 2 are either blue or red. $G$ may have parallel edges. 
A \emph{feasible solution} is a subset $S\subseteq V$ of \emph{marked} vertices such that $\max_{u\in V} \level (u)\leq L$, where the function $\level(\cdot)$ is computed recursively as follows:\footnote{The indicator function $\mathbbm{1}_{V \setminus S}$ has value $1$ on $V \setminus S$ and $0$ on $S$.}
\begin{align*}
  \level(v) &=\begin{cases}
    0 & \text{if $v$ is white,} \\
    \displaystyle \max_{(u,v) \in E} \level(u) \cdot \mathbbm{1}_{V \setminus S}(u) & \text{if $v$ is blue,}\\
    \displaystyle \max_{(u,v) \in E} \level(u) \cdot \mathbbm{1}_{V \setminus S}(u) \;+\; 1& \text{if $v$ is red.}
  \end{cases}
\end{align*}
The goal of the bootstrap problem is to find a feasible solution $S$ of minimum cardinality.

If $S=V$, then $\level(v) \le 1$ for every vertex $v$, so this solution is always feasible. If $S=\emptyset$, then $\level(v)$ is the maximum number of red vertices on any path ending at $v$, so this solution is feasible if and only if there does not exist a path in $G$ containing $L+1$ red vertices.

In terms of the problem we have been discussing, the DAG is a binary or an arithmetic circuit, the white vertices are the input variables, the blue vertices are the $\XOR$  (or addition) gates, the red vertices are the $\AND$ (or multiplication) gates,  $S$ is the set of ciphertexts that are bootstrapped during the computation, $\level(\cdot)$ is the noise level, and $L$ is the maximum allowed noise level.\footnote{Without loss of generality, we assume there are no \NOT gates, since they do not influence the noise level.}

 A feasible solution $S'\subseteq V$ is an \emph{$\alpha$-approximate solution} (for $\alpha\geq 1$) if $|S'|\leq \alpha\cdot \OPT$, where $\OPT$ denotes the minimum cardinality of a feasible solution.

\subsection{Results}
\label{sec:results}
We characterize the complexity of the bootstrap problem by providing a polynomial-time $L$-approximation algorithm (\cref{thm:approx}) and showing that, assuming the Unique Games Conjecture, $L$~is the best achievable approximation factor (\cref{thm:hardness}).
%this is the best approximation that is achievable (\cref{thm:hardness}).

\begin{theorem}[approximation algorithm]
\label{thm:approx}
Let $L\geq 1$ be an integer parameter.
There is a deterministic polynomial-time approximation algorithm for the bootstrap problem within approximation factor $L$.
\end{theorem}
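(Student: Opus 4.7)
The plan is to relax the bootstrap problem to a polynomial-size linear program and to round its optimum via a random offset on integer level labels. I introduce two variables per vertex: $x_v\in[0,1]$, relaxing $\mathbbm{1}[v\in S]$, and $y_v\in[0,L]$, a fractional relaxation of $\level(v)$. The LP consists of the constraints $y_v=0$ for every white $v$, $y_v\ge \mathbbm{1}[v\text{ is red}]$ for every $v$, $y_v\le L$ for every $v$, and, for every edge $(u,v)\in E$,
\[
  y_v \;\ge\; y_u - L\,x_u + \mathbbm{1}[v\text{ is red}],
\]
which is the big-$M$ linearization of $y_v\ge y_u(1-x_u)+\mathbbm{1}[v\text{ is red}]$. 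Minimizing $\sum_v x_v$ yields a polynomial-size LP whose optimum lower-bounds $\OPT$, as any feasible $S$ gives an LP solution of cost $|S|$ via $x_v=\mathbbm{1}[v\in S]$, $y_v=\level(v)$.

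Given a fractional optimum $(x,y)$, I draw a single random offset $\alpha\in[0,1)$ uniformly and set $y^{\mathrm{int}}_v:=\lfloor y_v+\alpha\rfloor\in\{0,\dots,L\}$. The rounded solution is
\[
  S \;:=\; \bigl\{u\in V : \exists\, v\in N^+(u),\; y^{\mathrm{int}}_v < y^{\mathrm{int}}_u+\mathbbm{1}[v\text{ is red}]\bigr\},
\]
that is, $u$ is bootstrapped precisely when its rounded integer level would otherwise be exceeded by a direct successor. Feasibility follows by induction along a topological order, showing $\level(v)\le y^{\mathrm{int}}_v\le L$ for every $v$: for a predecessor $u\notin S$ the definition of $S$ forces $y^{\mathrm{int}}_v\ge y^{\mathrm{int}}_u+\mathbbm{1}[v\text{ red}]\ge \level(u)+\mathbbm{1}[v\text{ red}]$ by the induction hypothesis, while for $u\in S$ the constraint $y_v\ge\mathbbm{1}[v\text{ red}]$ gives $y^{\mathrm{int}}_v\ge\mathbbm{1}[v\text{ red}]$, so the $u$-contribution to $\level(v)$ stays at most $y^{\mathrm{int}}_v$.

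For the approximation factor, a short computation shows that the per-edge event $\{y^{\mathrm{int}}_v<y^{\mathrm{int}}_u+\mathbbm{1}[v\text{ red}]\}$ is equivalent to $\Delta_{uv}>\{y_u+\alpha\}$, where $\Delta_{uv}:=y_u+\mathbbm{1}[v\text{ red}]-y_v$ is the slack of the LP edge inequality and $\{\cdot\}$ denotes the fractional part. Since $\{y_u+\alpha\}$ is uniform on $[0,1)$ and the LP enforces $\Delta_{uv}\le L\,x_u$ for every $v\in N^+(u)$, I get $\Pr[u\in S]=\Pr[\max_{v}\Delta_{uv}>\{y_u+\alpha\}]\le \min(1,L\,x_u)\le L\,x_u$, whence $\mathbb{E}[|S|]\le L\sum_v x_v\le L\cdot\OPT$. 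Derandomization is immediate: each $y^{\mathrm{int}}_v$ changes only when $\alpha$ crosses $1-\{y_v\}$, so $S$ takes at most $|V|+1$ distinct values on $[0,1)$; enumerating these breakpoints and returning the cheapest feasible $S$ yields a deterministic polynomial-time $L$-approximation.

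The main obstacle, I expect, is identifying the rounding rule itself. Naive threshold rounding on the $x_v$ alone fails because a violating path can spread its LP mass across many more than $L$ vertices, leaving no single $x_v$ above $1/L$. The crucial trick is to couple the $x$ variables with the auxiliary level variable $y$ through one global random shift $\alpha$: this converts each per-edge LP slack $\Delta_{uv}\le L\,x_u$ directly into a ``$u$ marked with probability at most $L\,x_u$'' event, unifying feasibility and the $L$-factor size bound into a single linear argument.
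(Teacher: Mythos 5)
Your proof is correct and takes a genuinely different route from the paper's. The paper uses a path-covering LP with one constraint per \emph{interesting path} (a path that starts and ends at red vertices and traverses exactly $L+1$ red vertices); this LP has exponentially many constraints and is solved via a separation oracle that computes, by dynamic programming, the quantities $f_i(v)$ (the minimum $x$-length of a path ending at $v$ traversing exactly $i$ red vertices). The rounding then draws a single threshold $t\in[0,1]$ and marks $v$ whenever $t\in[f_i(v),f_i(v)+x_v]$ for some $i\in\{1,\dots,L\}$; correctness follows from showing that the intervals $[f_{i_j}(v_j),f_{i_j}(v_j)+x_{v_j}]$ along any interesting path tile $[0,1]$. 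Your formulation instead introduces explicit fractional level variables $y_v\in[0,L]$, replaces the path constraints by the local big-$M$ edge constraints $y_v\ge y_u-Lx_u+\mathbbm{1}[v\text{ red}]$, and rounds with a random offset $\lfloor y_v+\alpha\rfloor$, marking a vertex exactly when a successor's rounded level would be violated. The two approaches are morally dual ways of encoding ``every bad path must be hit'': the paper cuts paths globally in a metric induced by $x$, whereas you certify feasibility locally via a fractional potential $y$ and the telescoping slack $\Delta_{uv}\le Lx_u$. Your LP is polynomial-size (no separation oracle needed), the rounding is a one-shot shift rather than a union over $L$ levels, and both the feasibility argument ($\level(v)\le\lfloor y_v+\alpha\rfloor\le L$ by induction along a topological order) and the per-vertex marking probability $\Pr[u\in S]\le Lx_u$ are clean; so your proof is a valid and arguably more compact alternative. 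The paper's path LP is at least as tight as your edge LP (it explicitly forbids fractional shortcuts across long blue segments), but since both relaxations lower-bound $\OPT$ this does not affect the guaranteed $L$-factor, only the constant on concrete instances.
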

The proof of \cref{thm:approx} is in \cref{sec:approx}.

\begin{theorem}[hardness of approximation]
\label{thm:hardness}
Let $L\geq 2$ be an integer parameter.
For any $\epsilon>0$, it is NP-hard to approximate the bootstrap problem within a factor of $L-\epsilon$, assuming the Unique Games Conjecture.
\end{theorem}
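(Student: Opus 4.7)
I would prove hardness by an approximation-preserving reduction from $L$-uniform hypergraph vertex cover ($L$-HVC) to the bootstrap problem with parameter $L$. Under the Unique Games Conjecture, $L$-HVC is NP-hard to approximate within factor $L-\epsilon$: for $L=2$ this is the Khot--Regev vertex cover hardness, and for $L\geq 3$ it follows from the work of Bansal and Khot. A linear-factor-preserving reduction from $L$-HVC then immediately yields the stated $(L-\epsilon)$-inapproximability.

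Given an $L$-HVC instance $H=(U,F)$, the construction of the DAG $G$ would proceed as follows. For each $u\in U$, introduce a red vertex $v_u$; bootstrapping $v_u$ will represent including $u$ in the cover. For each hyperedge $e=\{u_1,\ldots,u_L\}$ (with its vertices ordered by a fixed global total order on $U$), add a gadget---consisting of blue aggregator vertices arranged as binary trees plus a fresh red ``sink'' $R^e$---that creates a directed path from some white source to $R^e$ whose red vertices, in order, are exactly $v_{u_1},v_{u_2},\ldots,v_{u_L},R^e$. Because blue gates take max without incrementing the level, along this path $\level(v_{u_i})$ reaches $i$ and $\level(R^e)$ reaches $L+1$ whenever none of the $v_{u_i}$'s lies in $S$; feasibility therefore forces $S\cap\{v_{u_1},\ldots,v_{u_L}\}\neq\emptyset$, which is exactly the hyperedge covering constraint.

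I would then argue that the reduction is tight via an exchange argument: any feasible $S$ can be transformed without increasing its cardinality into one contained in $\{v_u:u\in U\}$, because any bootstrapped auxiliary vertex inside a hyperedge gadget can be swapped for a corresponding $v_{u_i}$ on the same bad path (which covers the same constraint and may also help other hyperedges sharing $u_i$). Consequently $\OPT$ for the constructed bootstrap instance equals the minimum $L$-HVC cover of $H$, and so any $(L-\epsilon)$-approximation for the bootstrap problem transfers directly to $L$-HVC, contradicting UGC.

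The main obstacle I expect is ensuring that the hyperedge gadgets are the \emph{only} source of bad paths (length-$(L+1)$ red sub-paths) in $G$. Because the $v_u$'s are shared across hyperedges, naively plugging them into gadgets can create unintended long red chains through the aggregators that force additional, spurious bootstrappings unrelated to the $L$-HVC structure. I expect to control this by starting from the $L$-partite variant of $L$-HVC (also UGC-hard, and forcing each $v_u$ to occupy a fixed position on every bad path through it), and by inserting cheap blue ``copy'' gates to isolate the gadgets so that the exchange argument remains tight.
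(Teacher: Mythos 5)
Your plan reduces from $L$-uniform hypergraph vertex cover, but you correctly identify the central difficulty yourself in the last paragraph: shared vertices $v_u$ across hyperedge gadgets create spurious interesting paths. Unfortunately, neither of your proposed fixes closes this gap, and I think the approach as a whole does not go through.

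Concretely, suppose $e_1$ and $e_2$ are two hyperedges that share a vertex $u$ which, under your chosen ordering, sits at an internal position on both gadget paths. Then the prefix of $e_1$'s path up to $v_u$, spliced with the suffix of $e_2$'s path from $v_u$ onward, is another directed path through exactly $L+1$ red vertices, hence an interesting path, but its red vertex set is a ``mixed'' $L$-tuple that is typically \emph{not} a hyperedge of $H$. This forces extra bootstrapping constraints not present in the $L$-HVC instance, so $\OPT$ on the constructed DAG can strictly exceed the optimal cover and the reduction is not tight. Moving to the $L$-partite variant does fix the positional ambiguity of each $v_u$, but it does not remove these crossings (a vertex in part $i$ can still be shared by two hyperedges whose parts $\ne i$ differ). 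Worse, it destroys the hardness source: for $L=2$, $2$-partite vertex cover is bipartite vertex cover, which is solvable in polynomial time by K\H{o}nig's theorem, so the theorem's base case $L=2$ would be lost entirely; and for $L\geq 3$, the best known UGC-hardness for $L$-partite $L$-uniform vertex cover is only on the order of $L/2$ (Guruswami--Saket), not $L-\epsilon$. The auxiliary blue ``copy'' gates cannot help either, because the crossings go through the shared \emph{red} vertices $v_u$, which must be shared for marking $v_u$ to cover all incident hyperedges; isolating them would break the coupling that makes the reduction work.

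In short, embedding arbitrary $L$-uniform hyperedges as directed paths without introducing unintended length-$(L+1)$ red paths is precisely the obstruction that makes this route hard, and it is essentially why Svensson's $(L-\epsilon)$-hardness of the \emph{DAG Vertex Deletion} (DVD) problem is a nontrivial theorem rather than a one-line gadget reduction from vertex cover. The paper sidesteps the issue entirely by taking Svensson's DVD hardness as a black box and giving an approximation-preserving reduction from DVD to the bootstrap problem: since DVD is already posed on a DAG with the ``no path of $L$ vertices'' structure, no crossing artifacts arise, and the only work left is a careful but elementary in-degree reduction (splitting high-in-degree red vertices via chains of blue vertices, plus a swap argument moving marked blue auxiliaries back onto red vertices). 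I would recommend restarting from that reduction rather than from $L$-HVC.
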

The proof of \cref{thm:hardness} is in \cref{sec:hardness}.

\medskip

To design  the approximation algorithm used in Theorem~\ref{thm:hardness},  we first observe that a set of marked vertices is a feasible solution if and only if, for every path $p=v_1 \dots v_k$ that starts and ends at red vertices  and that traverses $L+1$ red vertices (including endpoints), at least one vertex among $v_1,\dots,v_{k-1}$ is marked.
Such path $p$ is called an \emph{interesting path}. 

Based on this observation, our algorithm starts by solving a linear program relaxation with one constraint for each interesting path and obtains a value $x_v\in [0,1]$, for every $v\in V$, indicating whether the vertex $v$ should be bootstrapped.
The challenging part of the algorithm is the rounding.

In a naive attempt to do the rounding, we define $\delta(u,v)$ as the $u$-to-$v$ distance in the metric induced by $\{x_v\}$.
In order that every interesting path from $u$ to $v$ contains a marked vertex, we choose a value $t\in[0,\delta(u,v)]$ (randomly or according to some rules such as in the \emph{region growing technique}~\cite{purplebook}), and then mark a vertex $w\in V$ if and only if $\delta(u,w)\leq t\leq \delta(u,w)+x_w$.
This approach does not yield a good approximation because $\delta(u,v)$ might be very small or even zero (see \cref{fig:counter-example-dist-rounding}), as there might exist short non-interesting $u$-to-$v$ paths.
% This is due to the existence of the $u$-to-$v$ paths that are not interesting.
Hence the major difference between the bootstrap problem and classical cut problems (e.g., min-cut, multi-cut, multi-terminal cut): in the bootstrap problem, for each pair of vertices $(u,v)$, we only want to ``cut'' the interesting $u$-to-$v$ paths, but the non-interesting $u$-to-$v$ paths may remain.

\begin{figure}[htb]
  \centering
  \begin{minipage}[c]{.4\textwidth}
    \centering
    \includegraphics{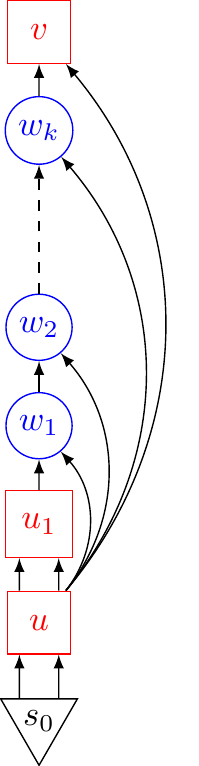}
  \end{minipage}
  \begin{minipage}[c]{.5\textwidth}
    \small
    Graph with $4 + k$ vertices. $L = 2$.\\
    The only interesting path is: $u u_1 w_1 w_2 \dots w_k v$.\\
    Circle vertices are blue, square vertices are red, and the triangle vertex is white.
    
    \medskip
    Suppose that the fractional solution obtained from the linear program is the following:
    \[
      \begin{cases}
        x_{w_i} = \frac{1}{k} &\text{for } i=1,\dots,k; \\
        x_{u} = x_{u_1} = x_{v} = 0.
      \end{cases}
    \]
    Then $\delta(u,v)=0$.
    %For $i=1,\dots,k$, we have:
    %\[
    %    \delta(u,u) = \delta(u,u_1) = \delta(u,v) = \delta(u,w_i) = 0.
    %\]

    %An optimal feasible solution has cardinality 1.
    
    %But with rounding:
    %\setlength\leftmargini{5mm}
    %\begin{itemize}
    %\item if $t \in [0, 1/k]$, $w_1,\dots,w_k$ are all marked and we lose an approximation factor $\ge k$,
    %\item if $t \in (1/k,1]$, no vertex is marked, so the rounding is not a valid feasible solution.
    %\end{itemize}
  \end{minipage}

  \caption{Example of circuit for which naive rounding does not work}
  \label{fig:counter-example-dist-rounding}
\end{figure}

Another attempt is to apply \emph{iterative rounding}\cite{jain2001factor,purplebook}.
However this does not seem to help for the bootstrap problem, mainly because the family of interesting paths is not closed under union, intersection, and difference.

Our approach to rounding is instead to separate paths according to the number of red vertices which they traverse and define a  {function} $f_i$ with respect to all paths that traverses exactly $i$ red vertices.
This is a main idea of the algorithm.
We perform a rounding for each function separately and obtain $L$ sets of marked vertices.
By taking the union of these sets, we obtain a feasible solution of cardinality at most $L\cdot \OPT$.

\begin{remark}
When $L=1$, the output of algorithm in \cref{thm:approx} is optimal.
\end{remark}

To prove the lower bound of Theorem~\ref{thm:hardness}, we first look at a related problem called the \emph{DAG Vertex Deletion (DVD)} problem~\cite{DBLP:journals/tc/PaikRS94,svensson2013hardness}.
In the DVD problem, we are given a directed acyclic graph~$H$ and an integer $L\geq 2$ and we want to delete the minimum number of vertices so that the resulting graph has no path containing $L$ vertices.
Svensson~\cite{svensson2013hardness} showed that approximating DVD within an $L-\epsilon$ factor is NP-hard, assuming the Unique Games Conjecture.

To show the UG-hardness of approximating the bootstrap problem, we provide an approximation-preserving reduction from the DVD problem to the bootstrap problem.

\subsection{Discussion of the Model}

\paragraph{Necessity of bootstrappings.} To date, the bootstrapping paradigm
is the only known way of obtaining an unbounded FHE scheme, i.e., one that can
homomorphically evaluate any efficient function using constant-size keys and ciphertexts. Therefore, to exploit the full potential
of fully homomorphic encryption, one must resort to bootstrapping.

\paragraph{Noise levels.} The bootstrap problem is a simplification of  the way in which FHE
schemes behave, since in practice noise
grows in a more complex manner. Indeed, all encryption procedures in fully
homomorphic encryption schemes consist of adding a short noise to the encoding
of the message (a bit or more generally an integer). Since the noise is {added} to the encoding, and computing
an \XOR gate homomorphically essentially corresponds to adding the
ciphertexts and thus adding the corresponding noises, on a logarithmic scale the amount of noise remains approximately as large as the maximum input noise
(up to one bit). On the other hand, computing an \AND gate requires a multiplication of the ciphertexts, and makes the noise growth noticeably larger~\cite{C:HalSho14}. This
is why the cryptographic community introduced the simplified model of~\cref{sec:motivation} and started building circuits for which the noise does not increase too much in this model~\cite{fse:CanteautCFLNPS16,EC:ARSTZ15}.

In this paper, we say that the noise level of a ciphertext is in
$[0, L]$, where $L$ is the parameter of the FHE scheme. In previous
works, the level was either in $[1, \ell_{\max}]$~\cite{FCW:LepPai13,SAC:PaiVia15}, or in $[9,
\ell_{\max}]$~\cite{C:HalSho14},\footnote{The $9$ comes from the fact that a
``fresh'' ciphertext (i.e., an unprocessed encryption of a bit) is said to
have noise level $1$, and that after a bootstrapping, the resulting ciphertext
has a noise level $9$.} where $\ell_{\max} =
L+1$. 
This is equivalent: $\ell=0$ should not be interpreted to mean that a ciphertext is
noise-free, but that the amount of noise the ciphertext contains results from a
bootstrapping operation.

\paragraph{Other noise behaviors.}
The noise model in \cref{sec:motivation} corresponds to the family of FHE schemes that are the most efficient in practice, but there exist other families of FHE schemes.

One family corresponds to the first implementations that were
proposed~\cite{EC:GenHal11,EC:CorNacTib12,EC:CCKLLT13}.
Therein, non-linear gates behave as $\cdot +\cdot$ for the
noise levels.\footnote{For this model, the lowest noise level needs to be set to~$1$ instead of~$0$, and the maximum allowed noise is $L+1$ instead of~$L$.}
Hence, the noise growth is exponential with the multiplicative
depth of the circuit and these schemes will never be used in practice.
In addition, to get reasonable parameters, the proof-of-concept implementations set $L=1$ (compared to, e.g., $L=41$ in the HElib
implementation~\cite{EC:HalSho15}) and in this particular case, the noise model is actually equivalent to the one we are considering (when $L=1$).

Another
family is the one of the GSW scheme~\cite{C:GenSahWat13,ITCS:BraVai14}. A variant of this scheme has been implemented by Ducas and Micciancio~\cite{EC:DucMic15}. The latter implementation has a faster wall-clock time for bootstrapping than~\cite{EC:CCKLLT13,EC:HalSho15}, but does not support large plaintext spaces nor vector plaintexts, hence it has larger amortized per-bit timing. The noise
behavior is slightly different there: it is asymmetric (i.e., the order of the
inputs matters). Multiplying a ciphertext $c_i$ by a ciphertext $c_j$, of
respective noise levels $\ell_i$ and $\ell_j$, yields a ciphertext of noise
level $\ell_i+1$.

\paragraph{Computing model.}
In the bootstrap problem, we minimize the total number of bootstrappings (i.e., marked vertices), thus accounting for classical sequential complexity.
We could also consider a parallel computing model, where doing any number of bootstrappings in parallel cost the same as doing one bootstrapping.
This might be relevant in some Cloud-based scenarios where the user encrypting the data has an unbounded amount of money and only want to minimize the time to get the result of the circuit evaluation over the encrypted data.
However, the financial cost would basically be proportional to the total number of bootstrappings.
Furthermore, we remark that this parallel version of the bootstrap problem has a trivial solution: topologically sorting the DAG and greedily marking the vertices with noise level greater than~$L$.

\fabr{Ok? Say more? In particular say that this solution might $\Omega(n)$ more vertices than required? If yes, add a Figure? Propose open problems with a fix number of parallel processes?}\tar{OK for me - I think we don't need to say more; the things you are proposing could go in a conclusion if we had one.}

\subsection{Other Related Work}
\label{sec:intro-dvd}

The DVD problem (see \cref{sec:results}) was introduced by Paik, Reddy, and Sahni in~\cite{DBLP:journals/tc/PaikRS94} in the context of certain VLSI design and communication problems.
Svensson showed that the DVD problem is UG-hard~\cite{svensson2013hardness}. 
His work was mainly motivated by the classical \emph{Discrete Time–Cost Tradeoff Problem} in the completely different setting of \emph{Project Scheduling}.

%%% Local Variables:
%%% mode: latex
%%% TeX-master: "../main"
%%% End:

\section{Approximation Algorithm}
\label{sec:approx}
%!TEX root=../main.tex

To prove \cref{thm:approx}, we give a randomized algorithm (\cref{alg:approx}) in \cref{sec:algo}, analyze it in \cref{sec:analysis}, and  derandomize it  in \cref{sec:derandomize}.

\subsection{Algorithm}
\label{sec:algo}
%Before presenting the randomized bootstrap algorithm (\cref{alg:approx}), we need to introduce some notation.

%\subsubsection{Notation}

For a path $p=v_1\dots v_k$, the vertex $v_k$ is called the \emph{final vertex} of $p$ and the vertices $v_1,\dots,v_{k-1}$ are called the \emph{non-final vertices} of $p$.

The following fact is used throughout the paper. Its proof is in \cref{sec:proof-equi}.
\begin{fact}
\label{fact:equi}
A set of marked vertices is a feasible solution if and only if every path that starts and ends at a red vertex and that contains exactly  $L+1$ red vertices  (including endpoints) has  a non-final vertex that is marked.
\end{fact}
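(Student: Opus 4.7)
The plan is to first give a combinatorial reformulation of $\level(v)$ and then use it to prove the equivalence directly. By induction on the topological order of $G$, I would show that
\[
  \level(v) \;=\; \max_{p}\, \#\{\text{red vertices on } p\},
\]
where the maximum is taken over all paths $p = u_0 u_1 \dots u_k = v$ (possibly of length $0$) such that every non-final vertex $u_0, \dots, u_{k-1}$ is unmarked; call such paths \emph{unblocked at $v$}. The inductive step is a direct unfolding of the recursion: the term $\level(u) \cdot \mathbbm{1}_{V \setminus S}(u)$ equals $0$ when $u$ is marked (so an unblocked path to $v$ cannot extend backwards through $u$) and equals $\level(u)$ when $u$ is unmarked (so any unblocked path ending at $u$ may be extended by the edge $(u,v)$), while the $+1$ for a red $v$ simply counts $v$ itself. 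Consequently, $S$ is feasible if and only if no unblocked path at any vertex contains more than $L$ red vertices.

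The forward direction then follows at once: if $p = v_1 \dots v_k$ is a path that starts and ends at red vertices, contains exactly $L+1$ red vertices, and has no non-final vertex marked, then $p$ is unblocked at $v_k$ and carries $L+1$ red vertices, contradicting feasibility.

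For the converse, suppose $\level(v) > L$ for some $v$. The reformulation yields an unblocked path $u_0 \dots u_k = v$ with at least $L+1$ red vertices; listing its red vertices in order as $r_1, \dots, r_m$ with $m \geq L+1$, I would take the subpath $q$ from $r_{m-L}$ to $r_m$. This $q$ starts and ends at red vertices, contains exactly $L+1$ red vertices, and its non-final vertices all lie in $\{u_0, \dots, u_{k-1}\}$ and are therefore unmarked, violating the path condition. The only slightly delicate point in the whole argument is the inductive lemma on $\level(v)$: one must observe that the recursion only ``cuts'' outgoing edges from marked vertices while still computing $\level(v)$ normally even when $v$ itself happens to be marked. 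Once this is set straight, the rest is routine bookkeeping on subpaths.
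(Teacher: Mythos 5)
Your proof is correct and follows essentially the same route as the paper's. Both arguments rest on the same combinatorial reformulation of $\level(v)$ as the maximum number of red vertices on a path ending at $v$ whose non-final vertices are all unmarked; the paper states this as a direct observation (and restricts attention to paths starting at a red vertex, which as you note is harmless since one can always trim a non-red prefix), while you substantiate it with an explicit induction on topological order. The remaining two directions — reading off a violated interesting path from a high-level vertex by taking the suffix spanning the last $L+1$ red vertices, and conversely seeing that an unblocked interesting path would force some $\level(v_k) \geq L+1$ — are exactly the steps compressed into the paper's two sentences beginning ``Thus $\ell(u)\leq L$ if and only if\dots''. Your remark about the recursion computing $\level(v)$ normally even when $v$ itself is marked is indeed the one small point worth flagging, and you have it right.
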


This fact leads to the definition of \emph{interesting} paths.

\begin{definition}[interesting path]
A path in $G$ is called \emph{interesting} if it starts and ends at red vertices, and traverses exactly $L+1$ red vertices  (including endpoints). For a given vertex $v\in V$ and a given level $i\in \{ 1,\ldots , L+1\}$, a path in $G$ is called \emph{$(v,i)$-interesting} if it starts at a red vertex, ends at $v$, and traverses exactly $i$ red vertices (including endpoints, if appropriate).
\end{definition}

We associate to each vertex $v\in V$ a non-negative weight $x_v$.
In our algorithm, these weights come from a solution of a linear program (LP).
These weights induce a metric.
More formally, we define the following notion of length.

\begin{definition}[length]
Let $p$ be a path in $G$.
We define the \emph{length} $\len(p)$ of $p$ as the sum of the weights~$x_v$ of all the non-final vertices $v$ of $p$.
We define $f_i(v)$ as the minimum length of a $(v,i)$-interesting path.\footnote{$f_i(v):=\infty$ if there is no $(v,i)$-interesting path.}
\end{definition}

%The notion $f$ is the key to the algorithmic design.
% We remark that when $v$ is white, $f_i(v) = \infty$ for any $i$, since there is no $(v,i)$-interesting path. 
% Thus in the sequel, we do not explicitly handle the case where $v$ is white.
We remark that, for any red vertex $v\in V$, a $(v,L+1)$-interesting path is an interesting path.

\newcommand{\dist}{\delta}

\begin{algorithm}[t]
\caption{Approximation algorithm for the bootstrap problem}
\label{alg:approx}
\begin{algorithmic}[1]
\State Solve the following LP relaxation, where we have one variable $x_v$ for each vertex $v$, representing whether there is a bootstrapping on $v$; and one constraint for each interesting path $p$.\label{alg-line:LP}
 \begin{align*}
\min \quad &\sum_{v\in V}x_v\\
\mathrm{s.\, t.}\quad&\sum_{\substack{\text{non-final}\\ \text{vertex $v$ of $p$}}} x_v \geq 1 & \text{$\forall$ interesting path $p$}\\
&0\le x_v \le 1 & \forall v\in V
\end{align*}
\State For every red vertex $u$ and blue vertex $v$, compute 
\[\delta(u,v)=\min \big\{ x_{u}+x_{v_2}+\cdots +x_{v_{k-1} }:  \text{ path } p=uv_2\ldots v_{k-1}v \text{ such that }v_2,\dots,v_{k-1} \text{ are blue}\big\},\]
using a classical shortest path algorithm.
By convention $\delta(u,v):=\infty$ if no such path exists.
\State For every vertex $v$ and integer $i\in \{1,\dots,L+1\}$, compute 
\[f_i(v)=\min \big\{ x_{v_1}+x_{v_2}+\cdots +x_{v_{k-1} }: \text{ path }p=v_1v_2\ldots v_{k-1}v \text{ is }(v,i)\text{-interesting}\big\}\]
using the side table $\delta$ and a dynamic program (see \cref{sec:algo}). 
By convention $f_i(v):=\infty$ if no such path exists.
\label{alg-line:DP}
\State Rounding: Pick a uniformly random value $t\in [0,1]$; A vertex $v$ is marked if and only if there exists $i\in \{1,\dots,L\}$ s.t.\ $t\in [f_i(v),f_i(v)+x_v]$. \label{alg-line:rounding}
\end{algorithmic}
\end{algorithm}

%\subsubsection{Subroutine: Computing \texorpdfstring{${\{f_i(v)\}}_{v,i}$}{f_i(v)} by a Dynamic Program.}
%\label{sec:DP}
% The side table $\delta$ defined in \cref{alg:approx} is computed using a classical shortest path algorithm.
%\begin{definition}
%For a red vertex $u$ and a blue vertex $v$, let $\delta(u,v)$ denote the minimum length of a $u$-to-$v$ path that uses only 	blue vertices in between. By convention $\delta(u,v):=\infty$ if no such path exists.
%\end{definition}
To compute ${\{f_i(v)\}}_{v,i}$, we use the following dynamic program, which proceeds in phases corresponding to $i=1,\dots,L+1$.
\begin{itemize}
\item For the base case $i=1$: 
$f_1(v)=
\begin{cases}
    \infty & \text{if $v$ is white},\\
	0 & \text{if $v$ is red},\\
	\displaystyle \min_{\text{$u$ red}} \delta(u,v) & \text{if $v$ is blue}.
\end{cases}$
\item For $i\in\{2,\dots,L+1\}$: 
$f_i(v)=
\begin{cases} 
    \infty & \text{if $v$ is white},\\
    \displaystyle \min_{(u,v) \in E} (f_{i-1}(u)+x_u)  & \text{if $v$ is red,}\\
	\displaystyle \min_{\text{$u$ red}} (f_{i}(u)+\delta(u,v)) & \text{if $v$ is blue}.
\end{cases}
$
%\fabr{@Claire: ``denoting $v_1,v_2$ its direct predecessors'' or ``denoting by ...'' or just ``with $v_1,v_2$ being its direct predecessors''}
\end{itemize}

\subsection{Analysis}
\label{sec:analysis}

We now prove that the output of \cref{alg:approx} is a feasible solution (correctness property) and has cardinality at most $L \cdot \OPT$ (approximation factor~$L$).
We then show that \cref{alg:approx} runs in polynomial time.

\paragraph{Analysis of Correctness.}
%From \cref{fact:equi}, we only need to show that every interesting path contains a non-final vertex that is marked.
Consider an interesting path $p=v_1\dots v_k$.
%For every $j\in \{1,\dots,k\}$, let $i_j\in \N$ denote the number of red vertices on the subpath $v_1 \dots v_j$ of $p$.
%By definition of interesting paths, the sequence ${\{i_j\}}_j$ is non-decreasing, $i_1=1$, $i_{k-1}=L$, and $i_k=L+1$.

\begin{lemma}
\label{lem:correct}
Let $p=v_1 \dots v_k$ be an interesting path.
For every $j\in \{1,\dots,k\}$, let $i_j\in \N$ denote the number of red vertices on the subpath $v_1 \dots v_j$ of $p$.
% let $i_1,\dots,i_k$ be the corresponding integers.
Then, for any $t\in[0,1]$, there exists $j\in \{1,\dots,k-1\}$ such that $t\in [f(v_j,i_j),f(v_j,i_j)+x_{v_j}]$.
\end{lemma}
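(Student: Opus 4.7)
The plan is to show that the family of intervals $\bigl\{[s_j, e_j]\bigr\}_{j=1,\dots,k-1}$ covers $[0,1]$, where we set
\[
s_j \defeq f_{i_j}(v_j), \qquad e_j \defeq s_j + x_{v_j}.
\]
The lemma is then immediate since any $t\in[0,1]$ lies in at least one such interval. Note first that every vertex on an interesting path is red or blue: white vertices have indegree~$0$ and hence can only appear as the first vertex of a path, but $v_1$ is red by definition of an interesting path.

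The argument rests on three observations. First, since $v_1$ is red we have $i_1=1$, and the one-vertex path $v_1$ is itself $(v_1,1)$-interesting with length~$0$, so $s_1 = f_1(v_1) = 0$. Second, $s_k = f_{L+1}(v_k)$: every $(v_k,L+1)$-interesting path starts at a red vertex and ends at the red vertex $v_k$ with $L+1$ red vertices in total, so it is an interesting path in the sense of the LP, and the LP constraint forces its length to be at least~$1$; hence $s_k \geq 1$. Third, I claim the monotonicity
\[
s_{j+1} \leq e_j \qquad \text{for every } j\in\{1,\dots,k-1\}.
\]
To see this, let $p^{*}$ be a $(v_j, i_j)$-interesting path achieving $\len(p^{*})=s_j$, which exists by the dynamic program. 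Appending the edge $(v_j,v_{j+1})\in E$ yields a path $p^{**}$ that starts at the same red vertex, ends at $v_{j+1}$, and contains $i_j + \mathbbm{1}[v_{j+1}\text{ is red}] = i_{j+1}$ red vertices; thus $p^{**}$ is $(v_{j+1},i_{j+1})$-interesting. Its length is $\len(p^{*}) + x_{v_j} = s_j + x_{v_j} = e_j$, which upper-bounds $s_{j+1}$.

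Now fix $t\in[0,1]$. Since $s_1 = 0$, the set $J \defeq \{j \in \{1,\dots,k-1\} : s_j \leq t\}$ is nonempty; let $j^{*} = \max J$. If $j^{*} < k-1$, then $s_{j^{*}+1} > t$, and by monotonicity $e_{j^{*}} \geq s_{j^{*}+1} > t$, so $t\in[s_{j^{*}}, e_{j^{*}}]$. If instead $j^{*} = k-1$, then monotonicity gives $e_{k-1} \geq s_k \geq 1 \geq t$ (using the LP-feasibility observation about $s_k$), while $s_{k-1} \leq t$ by the definition of $j^{*}$, so $t\in[s_{k-1},e_{k-1}]$ again. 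This exhausts the cases and concludes the proof.

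The main subtlety I expect is the boundary case $t=1$, which is where the argument would fail if we only had monotonicity: one really does need the LP constraint to guarantee that the rightmost interval $[s_{k-1},e_{k-1}]$ reaches~$1$. Checking that all interior vertices of the interesting path are blue or red (so the DP cases for $f_i$ apply cleanly when extending $p^{*}$ by $v_{j+1}$) is the only other bookkeeping point.
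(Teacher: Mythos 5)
Your proof is correct and follows essentially the same route as the paper's: establish $f_{i_1}(v_1)=0$, the chaining inequality $f_{i_{j+1}}(v_{j+1})\leq f_{i_j}(v_j)+x_{v_j}$, and $f_{i_k}(v_k)\geq 1$ via the LP constraint, then conclude that the intervals cover $[0,1]$. The only difference is cosmetic: the paper declares the covering a ``direct consequence'' of those three properties and points to a figure, whereas you spell out the max-index case analysis, which is a welcome bit of extra rigor.
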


Applying Lemma~\ref{lem:correct}, and using the fact that $i_j\in\{1,\dots,L\}$ for every $j\in \{1,\dots,k-1\}$, we see that the algorithm marks at least one non-final vertex of $p$, and so by  \cref{fact:equi} the output is a feasible solution, proving correctness.

\begin{proof} (Proof of Lemma~\ref{lem:correct})
By definition of interesting paths, the sequence ${\{i_j\}}_j$ is non-decreasing, $i_1=1$, $i_{k-1}=L$, and $i_k=L+1$.
It is sufficient to show that the interval $[0,1]$ is contained in the union of the intervals $[f(v_j,i_j),f(v_j,i_j)+x_{v_j}]$ over all $j\in \{1,\dots,k-1\}$, which is a direct consequence of the three following properties (see \cref{fig:intervals}):
%This can be implied by combining the following three properties (see \cref{fig:intervals}): 
\begin{enumerate}
\item $f(v_1,i_1)=0$;
\item for every $j\in\{1,\dots,k-1\}$, $f(v_{j+1},i_{j+1})\leq f(v_j,i_j)+x_{v_j}$;
\item $f(v_{k},i_{k})\geq 1$.
\end{enumerate}

The first property follows directly from the definition of $f$ since $v_1$ is red and $i_1=1$.

To show the second property, for any $j\in \{1,\dots,k-1\}$, consider a $(v_j,i_j)$-interesting path~$p'$ that achieves the length $f(v_j,i_j)$.
We observe that the concatenation of $p'$ and $v_{j+1}$ is a $(v_{j+1},i_{j+1})$-interesting path and it has length $f(v_j,i_j)+x_{v_j}$.
From the definition of $f(v_{j+1}, i_{j+1})$, we have $f(v_{j+1},i_{j+1})\leq f(v_j,i_j)+x_{v_j}$.

To show the third property, consider a $(v_k,i_k)$-interesting path $p'$ that achieves the length $f(v_k,i_k)$.
Then $p'$ is an interesting path since $v_k$ is red and $i_k=L+1$.
 ($p'$ may differ from $p$ though.)
Therefore, the constraint on $p'$ in the LP implies that $\len(p')\geq 1$.
Hence $f(v_k,i_k)=\len(p')\geq 1$.

This concludes the proof.
\end{proof}

\begin{figure}[tbh]
  \centering
  \includegraphics[width=\textwidth]{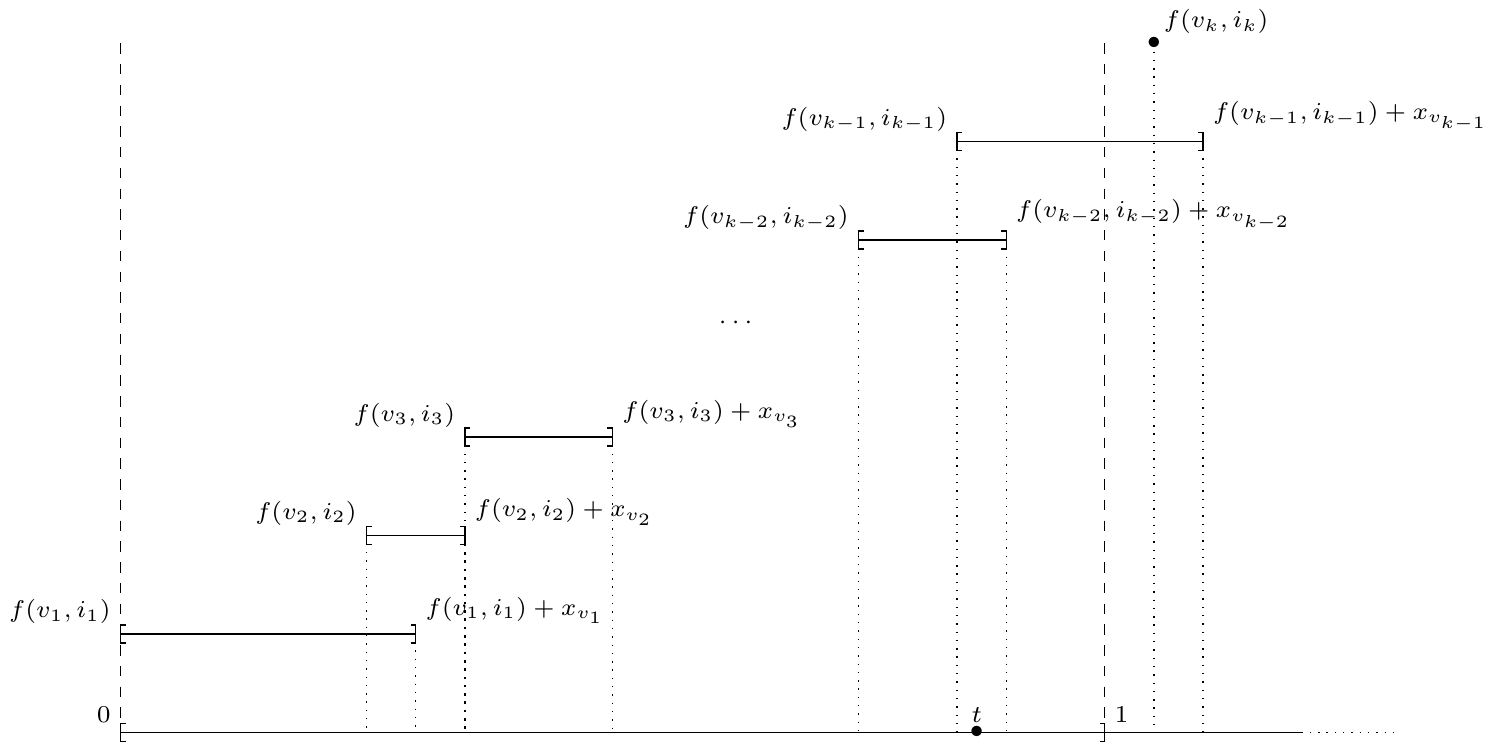}
  \caption{Illustration for the proof of \cref{lem:correct}}
  \label{fig:intervals}
\end{figure}

\paragraph{Analysis of Quality of Approximation.}
The expected value of the output is the expected number of marked vertices, $\sum_{v\in V} \Pr (v\text{ marked})$. Let $v\in V$.
By the algorithm and a union bound:
$$\Pr (v\text{ marked})=
\Pr (\exists i\in \{1,\dots,L\} ~:~t\in [f_i(v),f_i(v)+x_v])\leq 
\sum_i \Pr (t\in [f_i(v),f_i(v)+x_v]).$$
For $t$ uniformly random in $[0,1]$, the probability that $t\in [f_i(v),f_i(v)+x_v]$ is at most $x_v$. 
Thus $\Pr (v\text{ marked})\leq Lx_v$ and the expected value of the output is at most 
$L(\sum_{v\in V} x_v)$. Since the linear program is a relaxation of the problem, $\sum_v x_v$ is less than or equal to the optimum value  of the bootstrap problem, proving that the output is an $L$-approximation.

\paragraph{Analysis of Running Time.}
Clearly, computing ${\{f_i(v)\}}_{v,i}$ takes polynomial time.
Next we show that the LP in Step~\ref{alg-line:LP} of the algorithm can be solved in polynomial time (regardless of an exponential number of constraints).
To that end, it is well known (see, e.g.,~\cite{purplebook}) that a polynomial-time \emph{separation oracle}\footnote{A separation oracle takes as input a supposedly feasible solution to the linear program, and either verifies that it is indeed a feasible solution to the linear program or, if it is infeasible, produces a violated constraint.} for this LP suffices.

To check whether an oracle input ${\{x_v\}}_v$ is a feasible solution to the LP, we compute ${\{f_i(v)\}}_{v,i}$ with respect to ${\{x_v\}}_v$ using the same dynamic program as before, in polynomial time.
From the definition of $f$, we have that ${\{x_v\}}_v$ is a feasible solution if and only if $f_{L+1}(v)\geq 1$ for every red vertex $v\in V$.
Suppose there is some red vertex $v\in V$ with $f_{L+1}(v)<1$.
Then there must be an interesting path $p$ with final vertex $v$ such that the constraint on $p$ is violated.
It is easy to enrich the dynamic program in a standard manner, so that we obtain the entire path $p$.
Thus we complete the description of the polynomial-time separation oracle.

Therefore, the overall running time of \cref{alg:approx} is polynomial.

\subsection{Derandomization}
\label{sec:derandomize}
\cref{alg:approx} can be easily derandomized: ${\{f_i(v)\}}_{v,i}\cup {\{f_i(v)+x_v\}}_{v,i}$ contains at most $2 \lvert V\rvert \cdot L$ different values, so they separate the $[0,1]$ interval into at most $2 \lvert V\rvert \cdot L+1$ sub-intervals.
We can enumerate one value~$t$ for each sub-interval, compute a feasible solution with respect to each value $t$, and finally return the best solution among them.

%%% Local Variables:
%%% mode: latex
%%% TeX-master: "../main"
%%% End:

\section{Hardness of Approximation}
\label{sec:hardness}
%!TEX root=../main.tex

In this section, we prove \cref{thm:hardness}.
First, we recall the definition of the DAG Vertex Deletion (DVD) problem (see \cref{sec:results}).
In \cref{lem:reduction}, we reduce the DVD problem to the bootstrap problem.
The hardness of the bootstrap problem then follows from the hardness of the DVD problem~\cite{svensson2013hardness}. 

%Svensson~\cite{svensson2013hardness} showed the UG-hardness of approximating the DVD problem.
% \begin{lemma}[rephrasing of Theorem~1.1 in~\cite{svensson2013hardness}]
\begin{lemma}[Adapted from Theorem~1.1 in~\cite{svensson2013hardness}]
\label{lem:hardness-DVD}
Let $L\geq 2$ be an integer parameter.
For any $\epsilon>0$, it is NP-hard to approximate the DVD problem within a factor of $L-\epsilon$, assuming the Unique Games Conjecture.
% \fabr{@Claire: I'm not sure that the title of the lemma ``rephrasing of ...'' is correct in English. I would just remove ``rephrasing of''}
% \hanr{It's not the original statement. But a variant.}
% \tar{I changed to "Adapted from"}
\end{lemma}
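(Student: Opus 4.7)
The plan is to obtain this lemma essentially as a direct transcription of Svensson's main theorem, with the only real work being a careful check that the parameter conventions coincide. First I would set up the precise formulation of DVD used in \cite{svensson2013hardness}: there one is given a DAG $H$ and an integer parameter $k\geq 2$ and must remove the smallest set of vertices so that no directed path with exactly $k$ vertices (including endpoints) remains in the resulting DAG. The theorem there states that, assuming the Unique Games Conjecture, for every $\epsilon > 0$ it is NP-hard to approximate this quantity within a factor of $k-\epsilon$.

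Next I would match this against the definition of DVD recalled in \cref{sec:results}, where the parameter is called $L$ and the forbidden structure is a path ``containing $L$ vertices.'' Up to renaming $k \leftrightarrow L$, the two problems are identical, so the hardness bound transfers immediately. I would explicitly note that the quantification in Svensson's theorem is over all integers $k \geq 2$, which is exactly the range we need, so no separate argument is required for small parameters or boundary cases.

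The main obstacle is purely bookkeeping rather than mathematical: making sure that ``path of length $k$'' is everywhere interpreted in the vertex-counting convention (rather than the edge-counting convention), and that the DAG model matches (directed, finite, no weights, vertex-deletion rather than edge-deletion). Once those conventions are fixed, the proof of the lemma reduces to a one-sentence invocation of \cite[Theorem~1.1]{svensson2013hardness} with $k := L$. Since the next step of the paper (the reduction lemma) is approximation-preserving, this lemma is all one needs on the DVD side to derive \cref{thm:hardness}.
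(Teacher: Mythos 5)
The paper itself offers no proof of this lemma beyond the phrase ``Adapted from Theorem~1.1 in~\cite{svensson2013hardness},'' so the ``proof'' is simply an invocation of Svensson's result. Your proposal does exactly the same thing --- identify Svensson's DVD formulation with the one in \cref{sec:results}, check that the parameter range and vertex-counting conventions line up, and then substitute $k := L$ --- so it is correct and takes essentially the same approach as the paper.
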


\begin{lemma}
\label{lem:reduction}
There is an approximation-preserving reduction from the DVD problem to the bootstrap problem.
\end{lemma}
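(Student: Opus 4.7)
The plan is to give a polynomial-time construction $H \mapsto G$ using the same parameter $L$ for both problems, together with a polynomial-time procedure that turns any bootstrap-feasible $T \subseteq V(G)$ into a DVD-feasible $S \subseteq V(H)$ with $|S| \leq |T|$. This will immediately imply that the two optima coincide and that any $\alpha$-approximate bootstrap solution yields an $\alpha$-approximate DVD solution.

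First I would construct $G$ as follows. For every $u \in V(H)$, introduce a red vertex $r_u$ and fulfill its indegree-$2$ requirement by combining the $r$-images of the $H$-predecessors of $u$ through a small (possibly trivial) binary tree of blue vertices, padding with fresh white vertices when necessary. Since DVD must kill paths of $L$ vertices in $H$ whereas bootstrap must kill interesting paths containing $L+1$ red vertices, I would then append to each $r_u$ a dedicated red ``sink'' $r^*_u$ whose two predecessors are $r_u$ and a fresh white. The resulting $G$ has $O(|V(H)|+|E(H)|)$ vertices and every non-white vertex has indegree $2$.

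For the forward direction, given DVD-feasible $S$ I would set $T := \{r_u : u \in S\}$ and observe that every interesting path of $G$ is of one of two types: (I) ending at some $r^*_{u_L}$ with the shape $r_{u_1} \to \cdots \to r_{u_L} \to r^*_{u_L}$ for a length-$L$ path $u_1 \dots u_L$ in $H$; or (II) ending at some $r_{u_{L+1}}$ and arising from a length-$(L+1)$ path in $H$. In both cases $S$ must contain some $u_i$ with $i \leq L$, and $r_{u_i}$ is then a non-final marked vertex of the interesting path. For the backward direction I would define
\[S := \{u : r_u \in T\} \cup \{v(b) : b \in T \text{ is blue and lies in the predecessor tree of } r_{v(b)}\},\]
which satisfies $|S| \leq |T|$ since white and $r^*_u$ marks in $T$ contribute nothing. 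To certify DVD-feasibility, I would take any length-$L$ path $u_1 \to \cdots \to u_L$ in $H$ and look at its type (I) interesting path: a non-final cutter $w \in T$ is either some $r_{u_i}$, giving $u_i \in S$, or a blue vertex that necessarily lies in the predecessor tree of some internal $r_{u_i}$ with $2 \leq i \leq L$ (the tree of $r^*_{u_L}$ has no blues and the tree of $r_{u_1}$ is upstream of the path), giving $v(w) = u_i \in S$.

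The hard part will be the handling of blue vertices in $T$. The natural local rewriting ``replace each blue $b \in T$ by $r_{v(b)}$'' does not work, since $b$ may cut a type (II) interesting path on which $r_{v(b)}$ is the final vertex and therefore cannot serve as a replacement. My approach sidesteps this difficulty by never rewriting $T$ and instead proving DVD-feasibility of $S$ using only type (I) interesting paths, on which $r_{v(b)}$ is always non-final; once $S$ is shown DVD-feasible, the forward direction guarantees automatically that $\{r_u : u \in S\}$ also cuts every type (II) path, so type (II) paths never need to be analyzed in the backward step.
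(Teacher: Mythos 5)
Your proposal is correct, and the construction (red copy $r_u$ for each $u\in V_H$, blue binary trees/chains to reduce indegree, white padding, and a red sink $r^*_u$ to force $L+1$ red vertices) is essentially the same as the paper's, up to cosmetic choices (binary tree vs.\ chain of blues; fresh whites vs.\ one shared white $s_0$ with multi-edges; a white sibling vs.\ a doubled edge $(v,v')$). Where you genuinely diverge is the backward direction. The paper first proves a cleanup lemma (their Proposition~\ref{prop:move-to-red}) that replacing a marked blue $w_i^{(v)}$ by the \emph{downstream} red $v$ preserves bootstrap feasibility, then intersects the cleaned-up solution with $V_H$; you instead map each blue $b\in T$ directly to $v(b)$ and certify DVD-feasibility by examining \emph{only} type-(I) interesting paths (those ending at a sink $r^*_{u_L}$), for which the final vertex is never $r_{v(b)}$. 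Both are valid; your route is slightly shorter because it avoids re-verifying bootstrap feasibility of a modified set, at the cost of needing the explicit type-(I)/(II) case analysis.

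One remark: your closing claim that the ``natural local rewriting'' (replace blue $b$ by $r_{v(b)}$ in $T$) ``does not work'' is actually false. The paper's Proposition~\ref{prop:move-to-red} shows it \emph{does} preserve bootstrap feasibility: for a type-(II) interesting path $p$ ending at $r_{v(b)}$ that was cut only by $b$, one takes the prefix of $p$ up to its last non-final red vertex $v_j$ and appends the clone/sink of $v_j$; this is again an interesting path, it avoids $b$ (since $b$ lies strictly after $v_j$ in the predecessor tree of $r_{v(b)}$), and hence it must carry another mark of $T$, which also lies on $p$. So the difficulty you anticipated is real but surmountable; your proof simply sidesteps it, which is fine, but the stated impossibility is wrong.
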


\cref{thm:hardness} follows immediately from \cref{lem:hardness-DVD,lem:reduction}.
In the rest of the section, we prove \cref{lem:reduction}.
The proof is elementary but delicate, mainly because in the bootstrap problem, vertices have indegree at most 2, while in the DVD problem, vertices may have arbitrary indegree.

\bigskip

Consider a DVD instance with the DAG $H=(V_H,E_H)$ and the integer parameter $L\geq 2$.
As a warm-up, let us first suppose that all vertices in $V_H$ have indegree at most $2$.
We construct a DAG $G=(V,E)$ for the bootstrap problem:
\begin{itemize}%[noitemsep,nolistsep]
\item We create a new vertex set $V_H'$ (which can be viewed as a clone of $V_H$): for every $v\in V_H$, $V_H'$ contains a new vertex $v'$.
We also create a new vertex $s_0$. 
The vertex set $V$ of $G$ is defined as $V:=V_H\cup V_H'\cup \{s_0\}$. 
The vertices in $V_H\cup V_H'$ are red, and the vertex $s_0$ is white.
\item The edge set $E$ of $G$ consists of all edges of $E_H$ and the following edges: for every vertex $v\in V_H$, 2 copies of the edge $(v,v')$ and ($2-\text{indegree}(v))$ copies of the edge $(s_0,v)$.
\end{itemize}

The following lemma implies that there is an approximation-preserving reduction.

\begin{lemma}
\label{lem:dvd-bootstrap}
A feasible solution to the DVD problem for the instance $(L,H)$ can be transformed into a feasible bootstrap solution for the instance $(L,G)$ with at most the same cardinality, and vice versa.
\end{lemma}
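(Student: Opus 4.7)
My plan is to exploit the rigid structure of paths in $G$. Since $s_0$ has indegree $0$, it can only appear as the starting vertex of any path in $G$, and since no vertex of $V_H'$ has outgoing edges, the vertices of $V_H'$ are sinks of $G$. Combining these two facts, every red-to-red path in $G$ consists entirely of red vertices. By \cref{fact:equi}, I then only need to understand interesting paths, i.e., red-to-red paths of exactly $L+1$ red vertices. Such a path is necessarily of one of two types: either (A) $v_1 v_2 \dots v_{L+1}$ with every $v_i \in V_H$ and each $(v_i, v_{i+1}) \in E_H$, or (B) $v_1 v_2 \dots v_L v'_L$ where $v_1,\dots,v_L \in V_H$, each $(v_i, v_{i+1}) \in E_H$, and $v'_L \in V_H'$ is the clone of $v_L$. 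In either case the set of non-final vertices equals $\{v_1,\dots,v_L\} \subseteq V_H$, and $v_1 \dots v_L$ is a path on $L$ vertices in $H$.

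For the direction DVD $\Rightarrow$ bootstrap, I would take a DVD-feasible set $S_H \subseteq V_H$ and set $S := S_H$, viewed as a subset of $V$. Given any interesting path in $G$, its non-final vertex set contains a path on $L$ vertices in $H$; since $S_H$ hits every such path by DVD feasibility, some $v_j$ with $j \in \{1,\dots,L\}$ lies in $S_H = S$. By \cref{fact:equi}, $S$ is bootstrap-feasible, and $|S| = |S_H|$.

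For the converse, I would take a bootstrap-feasible set $S \subseteq V$ and set $S_H := S \cap V_H$. Given any path $v_1 \dots v_L$ on $L$ vertices in $H$, I append the clone $v'_L$ to obtain the type-B interesting path $v_1 \dots v_L v'_L$ in $G$; feasibility of $S$ then forces some $v_j$ with $j \in \{1,\dots,L\}$ to be marked, and since this $v_j$ belongs to $V_H$, it lies in $S_H$. Hence $S_H$ is DVD-feasible, and $|S_H| \le |S \cap V_H| \le |S|$.

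The main step requiring care is the structural claim that interesting paths in $G$ must fall into precisely the two types A and B; once this is in hand, both directions are short applications of \cref{fact:equi}. The role of the padding edges from $s_0$ is only to ensure that $G$ is a valid bootstrap instance (indegree at most $2$ everywhere), while the two copies of each clone edge $(v,v')$ guarantee that $v'$ has indegree~$2$ and that every $L$-vertex path of $H$ extends to a type-B interesting path, which is precisely what makes the converse direction go through.
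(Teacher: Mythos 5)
Your proof is correct and takes essentially the same approach as the paper's proof of this lemma in the warm-up (indegree $\le 2$) setting: you observe that interesting paths in $G$ have all non-final vertices in $V_H$ and so project to $L$-vertex paths of $H$, then apply \cref{fact:equi} in both directions, with the clone $v'_L$ providing the needed interesting path for the converse. The only reorganization is that you front-load this into an explicit type-A/type-B classification before the two directions, whereas the paper derives the same structural facts inline during the forward direction. One caveat worth flagging: your key claim that ``every red-to-red path in $G$ consists entirely of red vertices'' holds precisely because the warm-up $G$ has no blue vertices; it fails for the general construction in which blue gadget vertices $w_i^{(v)}$ are inserted to reduce indegrees, and there the paper needs the additional \cref{prop:move-to-red} to first push any marked blue vertices onto the corresponding red ones before projecting to $V_H$.
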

\begin{proof}
Let $S$ be a feasible solution to the DVD problem, i.e., every path of $L$ vertices in $H$ contains a vertex in $S$. 
We show that $S$ is a feasible bootstrap solution for the instance $(L,G)$.
Using \cref{fact:equi}, we only need to show that every interesting path in $G$ contains a non-final vertex that is in $S$.
Let $p=v_1 \dots v_k$ be an interesting path in $G$.
We observe that $v_1,\dots,v_k$ are all red vertices: $v_1$ is red by the definition of an interesting path, and $v_j$ (for any $2\leq j\leq k$) is red since it has positive indegree (and thus cannot be $s_0$). 
By the definition of an interesting path, we know that $p$ contains $L+1$ red vertices, so $k=L+1$.
We further observe that every $v_i$ (for any $1\leq i\leq k-1$) is in $V_H$ since $v_i$ has positive outdegree (and thus cannot be in $V_H'$).
Thus $v_1,\dots,v_{k-1}$ form a path of $L$ vertices in~$H$.
Since $S$ is a feasible solution to the DVD problem, at least one vertex among $v_1,\dots,v_{k-1}$ is in $S$.
Thus $p$ contains a non-final vertex that is in $S$.

Conversely, let $S$ be a feasible bootstrap solution.
We show that $S\cap V_H$ is a feasible solution to the DVD problem.
Let $p=v_1,\dots,v_L$ be a path of $L$ vertices in $H$.
We only need to show that $p$ contains a vertex in $S\cap V_H$.
We construct a path $p'$ in $G$ that is the concatenation of $p$ and the vertex~$v_L'$.
We remark that $p'$ starts and ends on red vertices and contains exactly $L+1$ red vertices.
Therefore, it is an interesting path.
From \cref{fact:equi}, $S$ contains a vertex $u$ that is a non-final vertex of~$p'$, i.e., $u\in\{v_1,\dots,v_L\}$, hence $u$ is on $p$.
Since $\{v_1,\dots,v_L\}\subseteq V_H$, $u\in S\cap V_H$.
Thus $p$ contains a vertex in $S\cap V_H$.
\end{proof}

Let us now deal with the general case where the vertices of $H$ have arbitrary indegrees.
First, we initialize the DAG $G$ using the same transformation as before, except that $G$ now contains $(2-\text{indegree}(v))$ copies of an edge $(s_0,v)$ only if the vertex $v$ has indegree at most~2 (rather than for any vertex $v$).
After this transformation, every red vertex in $G$ has indegree at least~$2$.
To transform~$G$ into a DAG for the bootstrap problem, we just need to deal with the red vertices with indegree at least $3$. 
Let $v$ be such a vertex. 
We observe that $v\in V_H$.
Let $v_1,\dots,v_d$ be the direct successors of $v$ in $H$.
We remove from~$G$ the edges $(v_i,v)$ (for each $i$) and add to $G$ new blue vertices $w_1^{(v)},\dots,w_d^{(v)}$ and the following edges:
\begin{enumerate}[noitemsep,nolistsep]
\item two copies of the edge $(v_1,w_1^{(v)})$,
\item for $i=2,\dots,d$, an edge $(w_{i-1}^{(v)},w_i^{(v)})$ and an edge $(v_i,w_i^{(v)})$,
\item two copies of the edge $(w_d^{(v)},v)$.
\end{enumerate}
The transformation is depicted in \cref{fig:reduceindegree}.
Let $G=(V,E)$ be the final graph.
We can verify that $(L,G)$ is an instance of the bootstrap problem.

We show that \cref{lem:dvd-bootstrap} holds in this general setting.
The transformation from a feasible DVD solution to a feasible bootstrap solution is a trivial extension from the previous proof.
% Next, we
Let us now focus on the transformation from a feasible bootstrap solution to a feasible DVD solution.
The following proposition is the key to the proof.
\begin{proposition}
\label{prop:move-to-red}
Let $S$ be a feasible bootstrap solution for $(L,G)$ which contains a blue vertex $w_i^{(v)}$ for some $v\in V_H$ and some integer $i$.
Then $(S\setminus \{w_i^{(v)}\})\cup \{v\}$ is also a feasible bootstrap solution for $(L,G)$.
\end{proposition}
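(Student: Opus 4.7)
The plan is to verify, via \cref{fact:equi}, that $S' := (S \setminus \{w_i^{(v)}\}) \cup \{v\}$ cuts every interesting path $p$ of $G$, i.e.\ that every interesting $p$ contains a non-final vertex lying in $S'$. I would split into cases depending on the position of $w_i^{(v)}$ on $p$.

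\textbf{Easy cases.} If $w_i^{(v)}$ does not lie on $p$, then $S \cap \text{non-final}(p) \subseteq S' \cap \text{non-final}(p)$, so the feasibility of $S$ transfers directly. If $w_i^{(v)}$ does lie on $p$, then because the only out-edges of each $w_k^{(v)}$ go up the chain and eventually into $v$, the vertex $v$ must also appear on $p$, strictly after $w_i^{(v)}$. When $v$ is non-final in $p$, the element $v \in S'$ cuts $p$ immediately.

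\textbf{Main case.} The one situation that is not automatic is when $w_i^{(v)}$ lies on $p$ and $v$ is the \emph{final} vertex of $p$: swapping $w_i^{(v)}$ for $v$ then a priori destroys the cut. I would argue by contradiction, assuming $S \cap \text{non-final}(p) = \{w_i^{(v)}\}$. Since $v$'s only in-edges in $G$ are the two copies of $(w_d^{(v)}, v)$ and the $w$-chain can only be entered at some $w_j^{(v)}$ via the red vertex $v_j$, walking backwards from $v$ along $p$ decomposes it as
\[
  p \;=\; \pi \cdot w_j^{(v)}\, w_{j+1}^{(v)} \cdots w_d^{(v)} \cdot v
\]
for some $j \leq i$, where $\pi$ is a path in $G$ ending at $v_j$. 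The $w$-vertices being blue, a red-count on $p$ forces $\pi$ to carry exactly $L$ red vertices. I would then exhibit the sibling path $q := \pi \cdot v_j'$, legal thanks to the double edge $(v_j,v_j')$ built into the reduction: $q$ starts at the same red vertex as $\pi$, ends at the red vertex $v_j'$, and carries $L+1$ red vertices, so $q$ is interesting. Its non-final vertices are exactly the vertices of $\pi$, and by the contradiction hypothesis $S$ meets none of them, so $S$ fails to cut the interesting path $q$ --- contradicting the feasibility of $S$. Hence $S$ already contains a non-final vertex of $p$ other than $w_i^{(v)}$, and that vertex is in $S'$.

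\textbf{Main obstacle.} The delicate point is precisely this last case: $v$ is useless as a cut since it is final on $p$, and a local reshuffle of the all-blue $w$-chain would cut nothing. The key trick is to reroute the suffix $w_j^{(v)} \cdots w_d^{(v)}\, v$ onto the single red clone $v_j'$ via the double edge $(v_j,v_j')$ that the reduction supplies; the red-count bookkeeping (exactly $L$ reds in $\pi$, hence $L{+}1$ in $q$) is what promotes the sibling path $q$ to \emph{interesting} and lets the minimality of the assumed cut of $p$ collide with the feasibility of $S$.
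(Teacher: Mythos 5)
Your proposal is correct and takes essentially the same approach as the paper: both reduce to the case of an interesting path $p$ ending at $v$, locate the last red vertex $v_j$ before the $w$-chain, and reroute to the clone $v_j'$ via the double edge $(v_j,v_j')$ to produce an interesting path disjoint from $w_i^{(v)}$. The paper argues directly (the cut of the rerouted path by $S$ is some $u\neq w_i^{(v)}$, hence $u\in S'$) whereas you phrase it as a contradiction, but the construction and the red-count bookkeeping are identical.
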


\begin{proof}
Let $S'=(S\setminus \{w_i^{(v)}\})\cup \{v\}$.
From \cref{fact:equi}, we only need to prove that every interesting path contains a non-final vertex that is in $S'$.
Since $S$ is a feasible bootstrap solution, the only non-trivial part is to prove that every interesting path ending in $v$ contains a non-final vertex that is in $S'$.
Let $p=v_1 \dots v_k$ be such a path, and let $j<k$ be the index of the last non-final red vertex of~$p$ (i.e., $v_j$ is a red vertex and there is no red vertex among $v_{j+1},\dots,v_{k-1}$).
Let $p'$ be the path $v_1 \dots v_j v_j'$.
Since there are exactly $L$ red vertices among $v_1,\dots,v_L$ and $v_j'$ is red, $p'$ is an interesting path. 
Thus some non-final vertex $u$ of $p'$ (i.e., $u\in\{v_1,\dots,v_j\}$) is in $S$.
We observe that $p'$ cannot contain $w_i^{(v)}$ since $w_i^{(v)}\in \{v_{j+1},\dots,v_{k-1}\}$.
Thus $u\in S'$ and is a non-final vertex of $p$.
\end{proof}

Let $S$ be a feasible bootstrap solution.
We construct another feasible bootstrap solution $S'$ with $|S'|\leq |S|$ such that $S'$ only contains red and white vertices.
As soon as $S$ contains a blue vertex, let it be~$w_i^{(v)}$ for some $v\in V_H$ and some integer $i$, we replace the blue vertex $w_i^{(v)}$ in $S$ by the red vertex~$v$. 
Let $S'$ be the final $S$.
Then $S'$ contains only red and white vertices and has cardinality at most $|S|$.
From \cref{prop:move-to-red}, $S'$ is a feasible bootstrap solution.

%\hanr{The following paragraph is to be moved to Appendix when space limited.}\tar{Since `` the space taken by diagrams will not be counted against the page limit'', I think we are good.}
We now show that $S'\cap V_H$ is a solution to the DVD problem using similar arguments as before.
Let $p=v_1 \dots v_L$ be a path of $L$ vertices in $H$.
We only need to show that $p$ contains a vertex in $S'\cap V_H$.
We construct a (unique) path $p'$ in $G$, which starts at $v_1$, goes through $v_2,\dots,v_L$, and ends at the red vertex $v'_L$.
We remark that $p'$ starts and ends on red vertices and contains exactly $L+1$ red vertices, namely $\{v_1,\dots,v_L,v_L'\}$.
Therefore, it is an interesting path.
From \cref{fact:equi}, $S'$ contains a vertex $u$ that is a non-final vertex of $p'$.
Since $S'$ contains only red and white vertices, we have $u\in\{v_1,\dots,v_L\}$, hence $u$ is on $p$.
Since $\{v_1,\dots,v_L\}\subseteq V_H$, $u\in S'\cap V_H$.
Thus $p$ contains a vertex in $S'\cap V_H$.

This concludes the proof of \cref{lem:reduction}.

  \begin{figure}[htb]
    \centering
    \begin{tikzpicture}
      \tikzset{arrowstyle/.style={draw=black,single arrow,minimum height=#1, single arrow,
    single arrow head extend=.2cm,}}
      \node (a) at (0,0) {\includegraphics[scale=1]{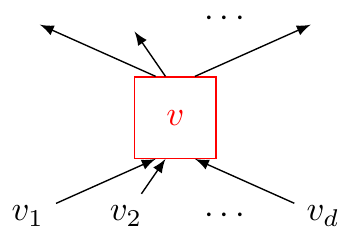}};
      \node[right=2.5cm of a] (b) {\includegraphics[scale=1]{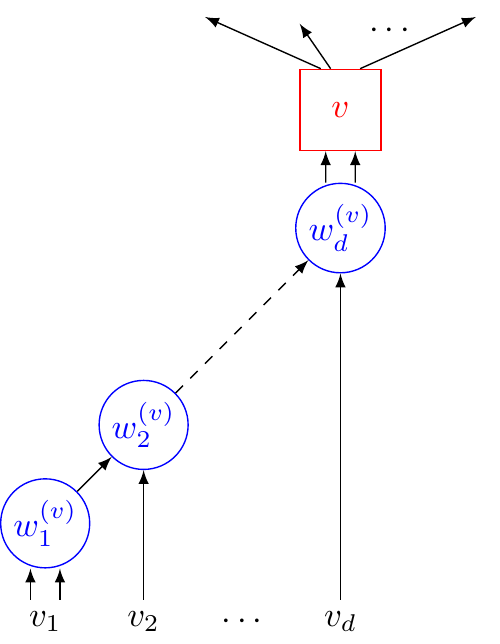}};
      \path (a) -- (b) node[midway,arrowstyle=1cm] {\vspace{0.1cm}};
    \end{tikzpicture}
    \caption{Reducing of the Indegree of a Vertex $v$ (in the Proof of \cref{lem:reduction}). \small Circle vertices are blue, square vertices are red, $v_{1},\dots,v_d$ are the direct predecessors of $v$, $w_1^{(v)},\dots,w_d^{(v)}$ are new blue vertices.}
    \label{fig:reduceindegree}
  \end{figure}

%%% Local Variables:
%%% mode: latex
%%% TeX-master: "../main"
%%% End:

\paragraph{Acknowledgements.}
We would like to thank Florian Bourse and Pierrick Méaux for their discussion on other FHE models.

\newpage
\bibliographystyle{plain}
\bibliography{cryptobib/abbrev3,cryptobib/crypto,add}

\newpage
\appendix
% !TEX root = ../main.tex

\section{Proof of Fact~\ref{fact:equi}}
\label{sec:proof-equi}

We first observe that a set $S$ of marked vertices is a feasible solution if and only if, for all \emph{red} vertices $u\in V$, the noise level $\ell(u)$ with respect to $S$ is at most $L$.
This is because a white vertex has noise level 0 and a blue vertex has noise level not exceeding those of its direct predecessors. 
We further observe that, for each red vertex $u\in V$,  the level $\ell(u)$ is the maximum number of red vertices on \emph{any} $v$-to-$u$ path (for some red vertex $v$) that does not contain any marked vertices (except $u$ if appropriate).
Thus $\ell(u)\leq L$ if and only if every path that starts at a red vertex and ends at $u$ and that contains exactly  $L+1$ red vertices  (including endpoints) has  a non-final vertex that is marked.
Therefore, all \emph{red} vertices $u\in V$ are such that $\ell(u)\leq L$ if and only if every path that starts and ends at a red vertex and that contains exactly  $L+1$ red vertices  (including endpoints) has  a non-final vertex that is marked. \qed

%%% Local Variables:
%%% mode: latex
%%% TeX-master: "../main"
%%% End:

\end{document}